\documentclass[aps, pra,twocolumn,showpacs,superscriptaddress,10pt]{revtex4-1}

\usepackage{amsfonts,amssymb,amsmath}
\usepackage[]{graphics,graphicx,epsfig}
\usepackage{amsthm}
\usepackage{color}

\def\identity{\leavevmode\hbox{\small1\kern-3.8pt\normalsize1}}

\newtheorem{theorem}{Theorem}

\newtheorem{deffi}{Definition}
\newtheorem{propo}{Proposition}
\newtheorem{corollary}{Corollary}
\newcommand{\be}{\begin{eqnarray} \begin{aligned}}
\newcommand{\ee}{\end{aligned} \end{eqnarray} }
\newcommand{\bpr}{\begin{propo}}
\newcommand{\epr}{\end{propo}}
\newcommand{\bpf}{\begin{proof}}
\newcommand{\epf}{\end{proof}}
\newcommand{\ket}[1]{\left | #1 \right\rangle}

\renewcommand{\epsilon}{\varepsilon}

\newcommand{\marcin}[1]{{\color{black} #1}}
\newcommand{\adrian}[1]{{\color{black} #1}}

\usepackage[T1]{fontenc}

\bibliographystyle{apsrev}

\begin{document}

\title{Non-classicality of temporal correlations}

\author{Stephen Brierley}
\affiliation{Heilbronn Institute for Mathematical Research, Department of Mathematics, University of Bristol, Bristol BS8 1TW, UK}

\author{Adrian Kosowski}
\affiliation{Inria and Universit\'e Paris Diderot, LIAFA, Case 7014, 75205 Paris Cedex 13, France}

\author{Marcin Markiewicz}
\email{marcinm495@gmail.com}
\affiliation{Faculty of Physics, University of Warsaw, Pasteura 5, PL-02-093 Warszawa, Poland}
\affiliation{Institute of Theoretical Physics and Astrophysics, University of Gda\'nsk, 80-952 Gda\'nsk, Poland}

\author{Tomasz Paterek} 
\affiliation{School of Physical and Mathematical Sciences, Nanyang Technological University, Singapore}
\affiliation{Centre for Quantum Technologies, National University of Singapore, Singapore}

\author{Anna Przysi\k{e}\.zna}
\affiliation{Institute of Theoretical Physics and Astrophysics, University of Gda\'nsk, 80-952 Gda\'nsk, Poland}
\affiliation{National Quantum Information Centre in Gdansk, Andersa 27, 81-824 Sopot}

\begin{abstract}
{The results of space-like separated measurements are independent of distant measurement settings, a property one might call two-way no-signalling. 
In contrast, time-like separated measurements are only one-way no-signalling since the past is independent of the future but not vice-versa. 
For this reason some temporal correlations that are formally identical to non-classical spatial correlations can still be modelled classically.
We propose a new formulation of Bell's theorem for temporal correlations, namely we define non-classical temporal correlations as the ones which cannot be simulated by propagating in time the classical information content of a quantum system given by the Holevo bound.
We first show that temporal correlations between results of any projective quantum measurements on a qubit can be simulated classically.
Then we present a sequence of POVM measurements on a single $m$-level quantum system that cannot be explained by propagating in time an $m$-level classical system and using classical computers with unlimited memory.
}

\end{abstract}

\maketitle

\emph{Introduction}. The violation of a Bell inequality \cite{Bell64, Clauser69, Brunner14} demonstrates that the outcomes of an experiment have contradicted a set of well defined classical intuitions. Quantum mechanics allows correlations between space-like separated parties that have no explanation in terms of a hidden variable model, i.e. they cannot be reproduced with the help of classical computers running pre-agreed algorithms.
However, when correlations are generated in a temporal scenario, by a sequence of time-like separated measurements, it is more difficult to demonstrate their non-classical nature. The causal structure of physics implies only \emph{one-way} no-signalling, namely the impossibility of sending communication backwards in time. The only bound on forward signalling is the information capacity of the physical system.

Here we analyse a single quantum system measured at $n$ points in time and consider to what extent one can prove that the temporal correlations between these measurement outcomes could \emph{not} be generated by a classical system. We assume an idealization, in which the $m$-level physical system carries no hidden degrees of freedom. In this case the classical information capacity of the system is $\log_2 m$; known as the \emph{Holevo bound} \cite{Holevo73}.

The previous approach to demonstrate non-classicality of temporal quantum correlations are so called ``temporal Bell inequalities'' \cite{LG85, BTCV04, Lapiedra06, B09, AHW10, Fritz10, Markiewicz13, BMKG13,Z10, Budroni14}. One of the problems with this approach, stated in \cite{Z10}, is that the classical assumptions behind the temporal inequalities, which are realism and non-invasiveness, were originally chosen to test the quantumness of a temporal evolution of \emph{macroscopic} quantum systems \cite{LG85}. As such, they do not provide a convincing test of quantumness in the case of a single evolving system. Moreover the operational meaning of the assumptions themselves is still a subject of debate \cite{Maroney14, Lapiedra06, Markiewicz13}.

%

The fact that in the sequential scenario the evolving system caries information has significant consequences \cite{Fritz10}.
Let us first consider the simplest possible case: a single two-level system which undergoes a sequence of two black-box operations at time instances $t_1$ and $t_2$.  
Each black box has an input describing which settings are chosen and an output describing the measurement result.
\marcin{For a quantum implementation in which black boxes perform projective measurements the inputs have the form of unit vectors $\vec a(t_1)$ and $\vec a(t_2)$, and outputs read $\alpha(t_1)=\pm 1$ and $\alpha(t_2)=\pm 1$, respectively. It can be verified \cite{BTCV04}, that the temporal correlation function $\langle \alpha(t_1)\alpha(t_2)\rangle$ equals $\vec a(t_1)\cdot \vec a(t_2)$}. This is (up to the sign) the correlation function that would be generated by two separated parties that share the singlet state. Moreover, it leads to a maximal violation of the temporal CHSH inequality \cite{BTCV04}. 
Can we conclude from these two facts that our system gives rise to non-classical temporal correlations? 
The answer is negative --- instead of a single qubit one can communicate one classical bit which together with black boxes equipped with classically correlated real vectors $\vec \lambda$ implement the Toner-Bacon $1$-bit protocol for simulating the singlet state \cite{Toner03}\footnote{Technically, we have to use Toner-Bacon protocol with one of the input vectors reflected, so as to compensate the global sign difference}. 
Furthermore, beginning the evolution with an arbitrary qubit state any sequence of $n$ projective measurements leads to correlations that factor into pairs of dot products of the consecutive input vectors ~\cite{BTCV04}.
Thus the temporal correlations of $n$ projective measurements on a qubit admit a classical simulation essentially using a sequence of Toner-Bacon protocols.

A similar situation can arise in the case of  multi-point correlations. For example, consider a sequence of three black boxes with two-setting inputs $\phi(t_k)=\{0,\pi/2\}$ for $k=1,2,3$ and binary outputs $\alpha(t_k)=\pm1$ together with the promiss that the inputs fulfill the constraint $\sum_k \phi(t_k)=\{0,\pi\}$. Let us assume, that the correlation function of outputs $\langle \alpha(t_1)\alpha(t_2) \alpha(t_3)\rangle$ equals $\cos(\phi(t_1)+\phi(t_2)+\phi(t_3))$. This is the correlation function of a Greenberger-Horne-Zeilinger (GHZ) state, and in the spatial scenario leads to the GHZ paradox~\cite{GHZ90}. \marcin{In the temporal scenario such a function can be obtained by a sequence of two-outcome POVM measurements on a single qubit \cite{Markiewicz14}.} However, as we prove later in the paper, the entire setup can be simulated by a classical protocol with exactly 1 bit of classical communication. Again, we cannot verify that truly non-classical correlations have appeared.

\emph{Definition of non-classical temporal correlations}. The above examples show that identifying a violation of temporal Bell inequalities with so called 'entanglement in time' \cite{BTCV04} is not always accurate. Indeed, forward signalling of classical information in the sequential scenario leaves a kind of \emph{communication loophole}: correlation functions which would be considered non-classical in the spatial setting can be simulated by classical protocols that use a classical communication channel with capacity given by the Holevo bound of the corresponding quantum particle. 
\marcin{The communication implicitly involved in a sequential process motivates adapting the concept of effective classical simulability with respect to the communication cost (see Fig. \ref{sim}): } 
\begin{deffi}
\label{strongdef}
The temporal correlation function $E(Y_1,\ldots,Y_N|X_1,\ldots,X_N)$ of the $m$ level \marcin{physical} system is non-classical if all classical algorithms that simulate the function require more than $\log_2 m$ bits of classical communication at some step of the simulation. 
\end{deffi}
The idea of quantifying the degree of non-classicality of a physical process \marcin{distributed in space or time} by the amount of classical resources needed to simulate it has already appeared in many contexts: communication complexity of simulating spatial quantum correlations \cite{BCT99, MBCC01, Toner03, RT07, DLR07, SZ08, VB09, Kosowski13, Brassard13}, memory complexity of simulating contextual effects \cite{KGPLC13}, and memory complexity of simulating unitary evolution \cite{Trojek05, Galvao08, Montina08, Z10}. 
\marcin{The fact that simulation of some sequential quantum procedures like contextuality tests or unitary evolution demands resources exceeding the Holevo bound has been already noticed in the literature \cite{Galvao08, KGPLC13}. Our approach generalizes these ideas to the scenario of sequential measurements performed by black boxes, with no restriction on their internal operations or memory. Definition \ref{strongdef} provides a \emph{theory-independent} characterisation of non-classical temporal correlations: one does not need to specify the physical implementation which leads to given correlations, but only the number of degrees of freedom of the physical system entering the boxes. In contrast, in the most similar scenario of quantifying the memory cost of quantum contextuality \cite{KGPLC13} the counted resource is the total number of internal states used by the simulating machines; which can be different from the size of the communication between the steps needed for simulation. We point out that there are other notions of simulability, which can be used to define non-classicality (eg. in terms of computational complexity \cite{Oszmaniec14}). In this work we solely refer to simulability in the context of \emph{communication complexity}}.

\begin{figure}[!b]
\includegraphics[width=0.45\textwidth]{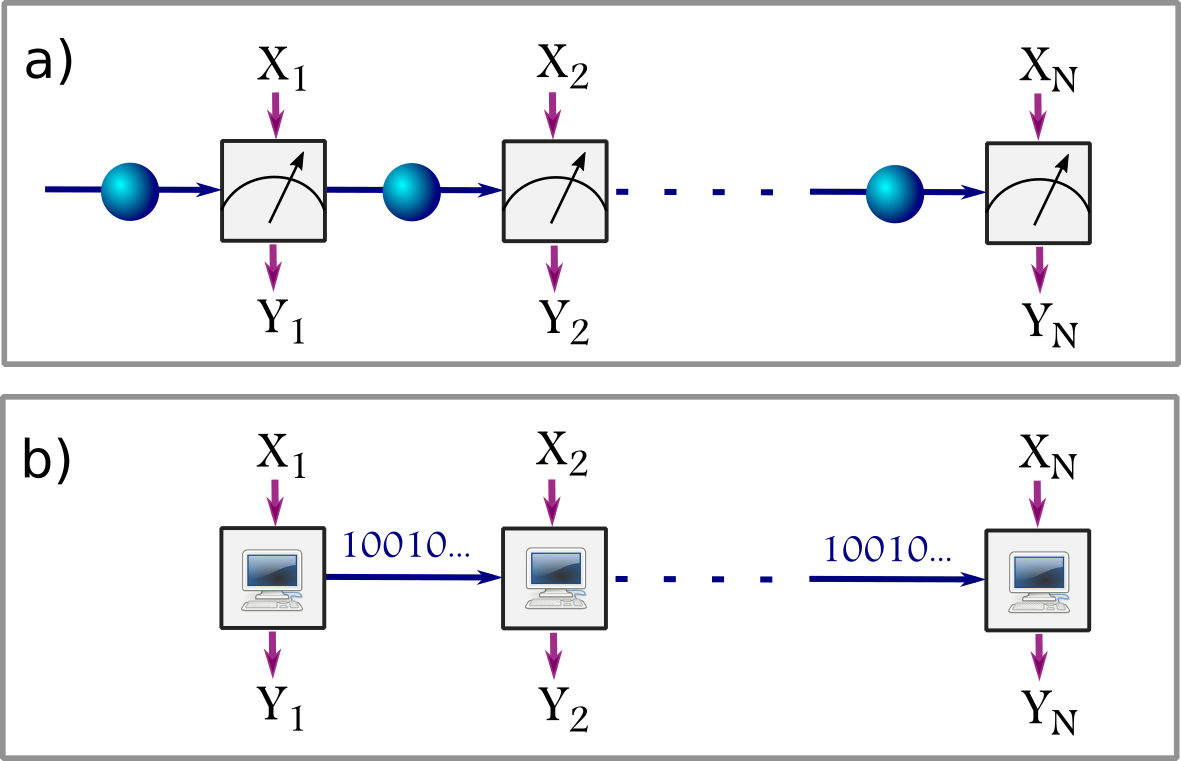}
\caption{Temporal correlation functions. a) A sequence of $n$ consecutive measurements on a single quantum system with settings provided by inputs $X_k$ and outcomes given by numbers $Y_k$. We say that temporal correlations are non-classical if there is no classical simulation depicted in panel b). At the $i$-th step of the simulation, the $i$-th black box can perform local computations and send classical communication to the next box. The correlation function obtained in the scenario a) is non-classical if every classical simulation b) requires more communication than the classical information capacity of the quantum particle in at least one stage of the simulation.}
\label{sim}
\end{figure}

We demonstrate the utility of our definition by presenting a sequence of quantum measurements that give non-classical temporal correlations provided the number of measurements is sufficiently large. The quantum system and measurements we propose have the appealing property that they are within the reach of current experimental techniques. In the simplest case, we find non-classical temporal correlations for a sequence of $16$ POVM measurements on a qubit system.

%
%

\emph{Non-classicality of temporal GHZ correlations}. We show that the temporal GHZ correlations, that is temporal correlations which have the same form as spatial correlations of an $n$-qu$m$it GHZ state:
\begin{equation}
\label{genGHZ}
\ket{GHZ}=\frac{1}{\sqrt{m}}\sum_{i=1}^m\ket{i}^{\otimes n},
\end{equation}
are non-classical on condition that  the number of measurement steps and the number of settings per observer is sufficiently large. To prove this fact we utilize the so called modulo-$(m,d)$ games \cite{Boyer04}. In the spatial scenario, these games are distributed computing tasks that can be solved with certainty with the help of shared GHZ-state \eqref{genGHZ} but not with classical randomized algorithms. We translate these games into the sequential scenario and prove that they can be always solved exactly by a sequence of POVM measurements on a single qu$m$it giving rise to temporal GHZ correlations.
We show that for some sets of parameters $n,m,d$ the correlations cannot be simulated by any protocol whose communicates are less than the number of classical bits given by the Holevo bound.

\begin{deffi}[sequential $n$-point modulo-($m,d$) problem]
A sequential $n$-point modulo-$(m,d)$ problem is a communication complexity task, in which $n$ separate ordered parties are given $(\log d)$-bit inputs $X_k$, with the promise that $\sum_{k=1}^n X_k \operatorname{mod} d=0$. The task of the parties is to output values  $Y_k\in\{0,1,\ldots,m-1\}$ fulfilling $d \sum_{k=1}^n Y_k \equiv \sum_{k=1}^n X_k\operatorname{mod} (md)$ in a sequential protocol, which at $k$-th stage allows $k$-th party to produce her local output $Y_k$ and communicate a $c_k$-bit message $M_k$ to the $(k+1)$-st party.
\end{deffi}
First we prove, that the above problem can be solved with certainty by a sequence of appropriate generalized quantum measurements on a single qu$m$it.
To solve the modulo-$(m,d)$ problem in a spatial domain \cite{Boyer04}, the $n$ parties share the state $\eqref{genGHZ}$ and apply local unitary operations $F_m^\dagger (S_{dm})^{X_k}$ to their particles,
where $\left[F_m\right]_{\alpha\beta}=\frac{1}{\sqrt{m}}\exp{\left(\frac{2i\alpha\beta\pi}{m}\right)}$ and $\left[S_{dm}\right]_{\alpha\beta}=\exp{\left(\frac{2i\beta\pi}{dm}\right)}\delta_{\alpha\beta}$,  $\alpha,\beta={0,1,...,m-1}$.
Finally the states are locally measured in the standard basis. 
Let us now find a temporal counterpart.
First, note that the bond dimension of the matrix product state representation \cite{MPSrev} of the  $n$-qu$m$it GHZ state is $m$, which implies that the state can be constructed by a sequential cascade of two qu$m$it gates $U$ \cite{Schon, BGWVC08}.  
This property implies that one can map arbitrary local projective measurements in the spatial scenario into a sequence of POVM measurements on a single system with dimension $m$ whilst keeping the correlation function fixed \cite{Markiewicz14}. 
Following the construction presented in Ref.~\cite{Markiewicz14} one finds the measurement operators $K_{Y_k}$ corresponding to different outcomes $Y_k$ by solving the system of equations:
\begin{eqnarray}
\label{povm1}
S(U(\ket{\psi}\otimes\ket{0}))&=&\sum_{Y_k=0}^{m-1}\left(K_{Y_k}\ket{\psi}\right)\otimes F_m^\dagger (S_{dm})^{X_j}\ket{Y_k},\nonumber
\end{eqnarray}
where $S$ is a swap operator, and $\ket{\psi}$ is an arbitrary state.
For odd dimensions $m$, measurement operators $K_{Y_k}$ obtained in this way are diagonal matrices $\mathcal{M}$ with $j$th diagonal element equal to $\left[\mathcal{M}(d)\right]_{jj}=\frac{1}{\sqrt{m}}\exp\left(\frac{i X_k \pi (2^j-2)}{d}\right)$.
For even $m$, measurement operators  corresponding to outputs $Y_k$ are $K_{Y_k}=\mathcal{A}_{\rm Y_k} \mathcal{M}$, where $\mathcal{A}_{\rm Y_k}$ is a diagonal matrix with elements  $\pm1$  and  $ \pm  i$.
In particular for $m=2$, $A_0=\mathrm{diag}(1,1)$,  $A_1=\mathrm{diag}(1,-1)$; for $m=4$, $A_0=\mathrm{diag}(1,1,1,1)$,  $A_1=\mathrm{diag}(1,-i,-1,i)$, $A_2=\mathrm{diag}(1,-1,1,-1)$,  $A_3=\mathrm{diag}(1,i,-1,-i)$;
for $m=6$, $A_0=A_2=A_4=\mathrm{diag}(1,1,1,1,1,1)$,  $A_1=A_3=A_5=\mathrm{diag}(1,-1,1,-1, 1,- 1)$.



We now consider classical simulations of modulo-$(m,d)$ games that use at most $\log m$ bits of classical communication at each step. 
Before presenting the main result, we first discuss the simplest case $m=d=2$.
It turns out, that this problem \emph{can} be simulated with a single bit of communication at each stage (see Appendix A).
Since the modulo-$(2,2)$  problem for $n=3$ is equivalent to the scenario of the original GHZ paradox \cite{Gavoile09}, discussed in the introduction, it follows that the two-setting GHZ qubit correlations, which in the spatial domain reveal strong non-classicality, do not fulfill the definition of non-classical temporal correlations. 

We now provide a general lower bound on the amount of communication which is needed to classically solve these problems, which will imply the main result of our work.

\begin{theorem}
\label{modmd}
Every classical protocol which solves the sequential modulo-$(m,d)$ problem with certainty uses \adrian{at least $c_k=\log (d/m)$ bits} of communication in all stages of the protocol except at most $md-1$ (not necessarily consecutive) stages, when $d$ is an integer power of $2$ and $m$ is even.
\end{theorem}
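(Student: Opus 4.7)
The plan is to establish the lower bound by tracking an auxiliary residual quantity that any correct protocol must implicitly carry through the stages, and by arguing that representing this quantity in the message $M_k$ demands at least $\log(d/m)$ bits at all but $md-1$ stages.

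Let $S_k := \sum_{j\leq k} X_j$ and $Y^{(k)} := \sum_{j\leq k} Y_j$. I introduce $R_k := dY^{(k)} - S_k \pmod{md}$, the running deficit between the output commitment and the accumulated input sum, together with $r_k := S_k \bmod d$. The task is solved with certainty exactly when $R_n \equiv 0 \pmod{md}$, and $R_k \equiv -r_k \pmod d$ constrains $R_k$ to only $m$ of the $md$ residues once $r_k$ is fixed. I would next show that for any deterministic protocol succeeding with certainty, $R_k$ is a function of the pair $(M_k, r_k)$: two histories reaching the same $M_k$ with the same $r_k$ admit identical sets of valid completions $(X_{k+1},\dots,X_n)$, and since future parties act only on $M_k$ and the subsequent inputs, they produce identical aggregated future outputs for each completion; in order to simultaneously achieve $R_n=0$, the pre-completion deficits must coincide. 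This yields the capacity inequality $|\mathcal{M}_k|\cdot d \geq T_k$, where $T_k$ is the number of distinct $(r_k,R_k)$ pairs realized at stage $k$.

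The core of the proof is a lower bound on $T_k$. It is convenient to pass to the complementary variable $Z_k := Y^{(k)} - \lfloor S_k/d \rfloor \pmod m$, so that $R_k \equiv -r_k + d Z_k \pmod{md}$ and the update takes the simple form $Z_k = Z_{k-1} + Y_k - \gamma_k \pmod m$, where the carry $\gamma_k := \lfloor (r_{k-1}+X_k)/d \rfloor \in \{0,1\}$ depends on $r_{k-1}$ whereas $Y_k$ depends only on $(M_{k-1},X_k)$. Two histories sharing $(M_{k-1},X_k)$ but straddling the threshold $r_{k-1}=d-X_k$ therefore branch into distinct $Z_k$. By constructing admissible input prefixes that iterate such branchings---using the power-of-two structure of $d$ to decouple carries bit by bit and the evenness of $m$ to prevent the $Y_k$ shifts from cancelling them globally---one realizes at stage $k$ at least $d^2/m$ distinct $(r_k,Z_k)$ pairs, unless the protocol has ``collapsed'' the realized set at an earlier stage. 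Because the realized set lives inside $\mathbb{Z}_d\times\mathbb{Z}_m$ of size $md$, at most $md-1$ such collapse stages can occur before the count saturates the $d^2/m$ bound; on every other stage $|\mathcal{M}_k|\geq d/m$, i.e.\ $c_k\geq \log(d/m)$.

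The main obstacle is the quantitative branching step: exhibiting, at every non-collapse stage $k$, an explicit family of admissible input prefixes that realizes $d^2/m$ distinct $(r_k,Z_k)$ pairs. I expect this to proceed by induction on $k$, maintaining a potential equal to the cardinality of the realized set and verifying that each non-collapse stage strictly enlarges the potential until it meets the target. The power-of-two assumption on $d$ and the evenness of $m$ appear essential here, as they give enough structural freedom in the choice of the $X_k$'s to force the required branchings in a controlled way and to rule out globally consistent cancellations by the protocol's $Y_k$ assignments.
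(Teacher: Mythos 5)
Your structural observations are correct as far as they go: for a deterministic protocol that succeeds with certainty, the deficit $R_k = d\sum_{j\le k}Y_j-\sum_{j\le k}X_j \pmod{md}$ is indeed determined by the pair $(M_k,r_k)$, which yields $T_k\le|\mathcal{M}_k|\cdot d$. But the quantitative step you then need --- realizing $T_k\ge d^2/m$ distinct $(r_k,Z_k)$ pairs --- is unattainable in exactly the regime where the theorem has teeth. As you note yourself, the realized set lives inside $\mathbb{Z}_d\times\mathbb{Z}_m$, so $T_k\le md$ always; your capacity inequality therefore can never force more than $|\mathcal{M}_k|\ge T_k/d\le m$, i.e.\ $c_k\ge\log m$. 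To conclude $c_k\ge\log(d/m)>\log m$ --- the only case in which the bound exceeds the Holevo bound, and the one used in Proposition 1 with $d>m^2$ --- you would need $T_k\ge d^2/m>md$, which exceeds the cardinality of the ambient set. So the ``main obstacle'' you flag is not merely unproved; its target is false. The root cause is that $R_k$ aggregates all past outputs into a single mod-$m$ quantity and hence carries at most $\log m$ bits of information beyond $r_k$, whereas the communication bottleneck actually comes from the fact that each party irrevocably commits an individual output $Y_k\in\{0,\dots,m-1\}$ at its own stage.

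The paper's proof exploits precisely that last point. At a stage where fewer than $\log(d/m)$ bits are sent, the map $x\mapsto(M_k(x),Y_k(x))$ from the $d$ possible local inputs to (message, output) pairs has fewer than $(d/m)\cdot m=d$ possible values, so by pigeonhole two inputs $x$ and $x+\Delta_k$ are indistinguishable both in the forward message and in the committed output. If this happens at $md$ stages, a subset-sum lemma (using $d=2^s$ and $m$ even: among $md$ residues $\Delta_k$ some subset sums to $0\bmod d$ but not to $0\bmod md$) lets an adversary flip the inputs on that subset, preserving the promise and every output while changing the required answer --- a contradiction. Note that the exceptional-stage count $md-1$ comes from this lemma, not from a saturation of a monotone potential. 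Any repair of your argument would have to track per-stage output commitments rather than the aggregated deficit $R_k$, and would still need something playing the role of the subset-sum lemma to convert $md$ local collisions into one globally inconsistent pair of inputs; the collapse-stage bookkeeping you sketch does not substitute for it.
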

The proof comes by a minor modification of the argument used in~\cite{Galvao08}, \marcin{adapted to the scenario of sequential measurements}. We provide a stand-alone proof for completeness, since the original proof applies an iterated argument in a slightly informal way, leading to incorrect constants in the analysis. To perform a proof by contradiction, fix any classical protocol which claims to solve the modulo-$(m,d)$ problem with certainty, while using \adrian{less than} $c_k$ bits of communication in some $n_0 = md$ stages. We will act as an adversary, constructing two valid inputs of the modulo-$(m,d)$ problem, $X_k$ and $X_k'$ such that $\sum_{k=1}^n X_k \not \equiv \sum_{k=1}^n X'_k \operatorname{mod} md$. The inputs will be defined so that the corresponding outputs $Y_k$ and $Y_k'$ will be indistinguishable, i.e., $Y_k = Y'_k$, for all $1 \leq k \leq n$. Hence, the modulo-$(m,d)$ problem will be solved incorrectly for at least one of the inputs $X_k$, $X_k'$, leading to a contradiction.

The construction proceeds as follows. Let $K_0 = \{k_1, k_2, \ldots, k_{n_0}\}$ be the set of indices of the stages for which the protocol sends messages of size less than $c_k$. We proceed with the construction of inputs $X_k, X_k'$ sequentially, so that the following predicates are fulfilled at any step $k$:
\begin{itemize}
\item For all $j\leq k$, $Y_j = Y'_j$,
\item For all $j\leq k$, $M_j = M'_j$, where $M_j$ is the message.
\end{itemize}
The construction of inputs  $X_k, X_k'$ proceeds as follows:
\begin{itemize}
\item For any stage $k\notin K_0$, $k\neq n$, we set $X_k$ arbitrarily, and put $X_k' = X_k$. Clearly, since $M_{k-1} = M'_{k-1}$ by the inductive assumption and $X_k' = X_k$, the protocol will act identically in the $k$-th step in both cases, thus we have $M_k = M_k'$ and $Y_k = Y'_k$.
\item For any stage $k\in K_0$, given message $M_{k-1} = M'_{k-1}$ we consider the set of outcome pairs $p(x) = (M_k(x), Y_k(x))$ of the execution $k$-th step of the protocol, taken over all possible inputs $x \in \{0,1,\ldots,d-1\}$. Since $|M_k| \leq 2^{c_k}$ for $k\in K_0$ and $Y_k\in\{0,1,\ldots,m-1\}$, the set of possible output pairs $p(x)$ \adrian{has less than $2^{c_k} m$ elements, where we note that $2^{c_k} m = 2^{\log(d/m)} m = d$. Consequently, }there exists a pair of values $x <x'$, $x,x' \in \{0,1,\ldots,d-1\}$, such that $p(x) = p(x')$. We denote $x' = x + \Delta_k$, with $\Delta_k\in \{1,2,\ldots, d-1\}$. We now put $X_k = x$, and choose $X_k'\in \{X_k, X_k + \Delta_k\}$, according to a rule which will be described later. Regardless of this choice, we have $M_k = M_k'$ and $Y_k = Y'_k$.
\item Finally, in stage $k=n$, we set $X_k$ so that the input $\{X_k\}_{k=1}^n$ satisfies the modulo-$d$ promise, and also put $X_n' = X_n$.
\end{itemize}
It remains to show that it is possible to fix $X_k'$ from among each pair of considered values $\{x, x+\Delta\}$ for $k\in K_0$, so that $\sum_{k\in K_0} X_k \not\equiv \sum_{k\in K_0} X'_k \operatorname{mod} (md)$. This is possible by the following lemma, whose proof is presented in Appendix B.

\emph{Lemma 1}. Let $\{\Delta_k\}_{k \in K_0}$ be any sequence of integers, with $\Delta_k \in \{1,2,\ldots, m d -1\}$, where $m$ is even and $d=2^s$ for some integer $s>0$. Then, there exists a subset of indices $K_0' \subseteq K_0$ such that $\sum_{k\in K_0'} \Delta_k \equiv 0 \operatorname{mod} d$ and $\sum_{k\in K_0'} \Delta_k \not\equiv 0 \operatorname{mod} (md)$.

Using Lemma 1, we then pick $X_k' = X_k + \Delta_k$ for all steps $k\in K_0'$, and put $X_k' = X_k$ for all steps $k \in K_0\setminus K_0'$. This completes our construction.

\marcin{In the above proof, we restricted our considerations to deterministic protocols. For randomized protocols, the claim of the proposition also holds in the following sense: any randomized protocol which does not satisfy the assumptions of the proposition will lead to an incorrect output for some instances of the modulo-$md$ problem, with strictly positive probability.}

The above theorem can be treated as a temporal version of Bell inequalities with auxiliary communication in the spatial scenario \cite{Bacon03, Maxwell14}, and directly leads to the main result of our work:

\begin{propo}
\label{mainprop}
The temporal GHZ correlations arising from a sequential measurements on a single qu$m$it, where $m$ is even, are non-classical for \adrian{$n \geq 2m^3$}.
\end{propo}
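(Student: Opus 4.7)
The plan is to instantiate Theorem~\ref{modmd} at a value of $d$ for which the lower bound $\log(d/m)$ on per-stage classical communication strictly exceeds the Holevo capacity $\log_2 m$ of a qu$m$it. The quantum part of the argument is already in place: the sequence of POVMs $K_{Y_k}$ built from the matrix-product-state cascade of the GHZ state reproduces the temporal GHZ correlation function, and therefore solves the sequential modulo-$(m,d)$ problem with certainty, for every even $m$ and every $d$ that is an integer power of $2$.

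The first step I would take is to fix $d$ as the smallest power of $2$ strictly greater than $m^2$, so that $m^2 < d \leq 2 m^2$. For even $m$ and this choice of $d$ the hypotheses of Theorem~\ref{modmd} are satisfied, and one has $\log(d/m) > \log m$. Consequently, in any classical simulation of the task, at every stage outside a set of at most $md-1$ exceptional stages the message size must strictly exceed the Holevo bound of the evolving physical system.

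The second step is to guarantee that at least one such ``high-communication'' stage is actually present in the protocol. This is automatic whenever $n \geq md$, and since $d \leq 2 m^2$ the requirement $n \geq 2 m^3$ is already sufficient to force $n \geq md$. Combining this with the quantum solution of the modulo-$(m,d)$ task then contradicts Definition~\ref{strongdef} for every classical simulation, establishing that the temporal GHZ correlations are non-classical in the regime $n \geq 2 m^3$.

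The only slightly delicate point is the choice of $d$: one has to observe that a power of two always lies in $(m^2, 2 m^2]$ (trivially) and that $md \leq 2 m^3$ (immediate from $d \leq 2m^2$). I do not expect any genuine obstacle beyond this bookkeeping; the conceptual content of the proof is simply that picking $d$ slightly above $m^2$ converts the per-stage communication lower bound of Theorem~\ref{modmd} into a strict overshoot of the Holevo bound, so that $n \geq 2m^3$ stages suffice to expose this overshoot in at least one round.
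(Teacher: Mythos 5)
Your proposal is correct and follows essentially the same route as the paper's own proof: the same choice of $d$ as the smallest power of two exceeding $m^2$ (so $d\leq 2m^2$ and $\log(d/m)>\log m$), combined with $n\geq 2m^3\geq md$ to guarantee at least one stage where Theorem~\ref{modmd} forces the communication above the Holevo bound. The only difference is that you explicitly restate the quantum solvability of the modulo-$(m,d)$ task as part of the argument, which the paper establishes separately beforehand.
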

\begin{proof}
It suffices to show that there exists a modulo-$(m,d)$ game for some $d$ and $n$, for which classical simulation uses in at least one stage of the protocol more than $\log m$ bits of communication. 
Using Theorem 1, we need to choose parameters so that the following two conditions are fulfilled:
\begin{itemize}
	\item \adrian{$\log(d/m)>\log m$}, which means that the classical communication needed is greater than the Holevo bound for the system,
	\item \adrian{$n\geq md$}, which guarantees that the amount of communication equal to \adrian{$\log(d/m)>\log m$} bits is needed in at least one stage.
\end{itemize}
\adrian{Now, let $d$ be the smallest integer power of $2$ larger than $m^2$, we have $d\leq 2m^2$.
Taking any $n \geq 2m^3 \geq md$, guarantees that the numbers $m,n,d$ fulfill both of the above conditions.}
\end{proof}

The above proposition shows that temporal GHZ correlations of any qu$m$it reveal temporal non-classicality, if $n$ is sufficiently large, which implies that any simulating protocol uses more bits of communication than the Holevo bound in at least one stage of the protocol. As a matter of fact, our theorem shows that this actually happens in almost all stages of the protocol (for detailed analysis see Appendix C).

\marcin{We point out that the above proposition holds also for a single qubit case and its possibility follows from allowing POVM measurements. Therefore we provide a first demonstration, that simulation of a \emph{temporal correlation function} on qubit demands resources that exceed the Holevo bound (note that a similar effect for a \emph{unitary evolution} of a single qubit was shown before \cite{Galvao08}).}

\emph{Conclusions.} In general terms, correlations between physical systems can be considered in two distinct scenarios: spatial and temporal. In the spatial scenario local measurements are performed by space-like separated parties who may share a source of joint randomness but who are unable to communicate. Temporal correlations, on the other hand arise from a sequence of measurements on a single physical system at different time instances. Communication is now allowed from one time instance to the next but is limited by the information capacity of the system.

We showed that in the temporal measurement scenario one can define a notion of non-classical $n$-point correlations with a clear operational interpretation. Namely, such correlations cannot be simulated by any classical protocol whose communication is limited by the  Holevo capacity of the evolving quantum system. In addition, we demonstrated that the temporal analogue of generalised GHZ correlations arising from sequential measurements on a single qu$m$it, reveal non-classicality in the temporal scenario provided the number of measurements is large enough. Apart from these foundational issues, we provided the first general lower bound on the communication complexity of simulating multi-point quantum correlations in a sequential measurement scenario (see Ref. \cite{Brassard13} for results on the classical communication cost of simulating spatial GHZ correlations).


\emph{Acknowledgements.}---We would like to thank Marcin Paw{\l}owski, Rafa{\l} Demkowicz-Dobrza\'nski and Marek \.Zukowski for helpful discussions. MM and AP gratefully acknowledge the financial support by Polish Ministry of Science and Higher Education Grant no. IdP2011 000361. This work is supported by the National Research Foundation, Ministry of Education of Singapore Grant No. RG98/13, start-up grant of the Nanyang Technological University,  the NCN   Grant   No.
2012/05/E/ST2/02352 and by the EC under the FP7 IP project SIQS co-financed by the Polish Ministry of Science and Higher Education.

\appendix

\section{Communication complexity of sequential modulo-(2,2) problem}

\begin{corollary}
\label{smod4sim}
The sequential modulo-$(2,2)$ problem can be solved with certainty with one bit of (classical) communication at each stage of the protocol (that is $c_k=1$ for all $k=1,\ldots,N-1$).
\end{corollary}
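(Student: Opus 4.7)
My plan is to give an explicit one-bit protocol and verify it by maintaining a simple two-part invariant. I would first simplify the target relation using the promise: writing $S := \sum_{k=1}^n X_k$, the promise makes $S$ even, so the required relation $2\sum_k Y_k \equiv S \pmod 4$ is implied by the cleaner identity $\sum_k Y_k = S/2$. The integer $S/2 = \lfloor S/2\rfloor$ has a natural combinatorial meaning: it counts the number of ``carries out of bit~$0$'' generated when the bits $X_1,X_2,\ldots,X_n$ are added incrementally in binary. The whole task therefore reduces to tracking these carries on the fly, which only requires the current running parity.

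Accordingly I would propose the following protocol. Set $M_0:=0$ and, at stage $k$, let party $k$ use its input $X_k$ and the one bit $M_{k-1}$ it has received to produce
\[
Y_k \;=\; M_{k-1}\wedge X_k, \qquad M_k \;=\; M_{k-1}\oplus X_k,
\]
and then forward $M_k$ to party $k+1$. By construction every transmission has length $c_k=1$, so the communication constraint is met automatically.

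To prove correctness I would carry out an induction on $k$, with $S_k:=\sum_{j=1}^{k}X_j$, establishing the joint invariant $M_k = S_k \bmod 2$ and $\sum_{j=1}^{k}Y_j = \lfloor S_k/2\rfloor$. The first half is immediate from the XOR update. For the second, the key observation is that when $X_k$ is added to $S_{k-1}$, the carry out of bit~$0$ equals $(S_{k-1}\bmod 2)\wedge X_k = M_{k-1}\wedge X_k = Y_k$; a short four-case split on $(M_{k-1},X_k)\in\{0,1\}^2$ then verifies $\lfloor S_k/2\rfloor = \lfloor S_{k-1}/2\rfloor + Y_k$, closing the induction. Specializing at $k=n$ and invoking the promise gives $\sum_k Y_k = S/2$, hence $2\sum_k Y_k = S$, which certainly satisfies $2\sum_k Y_k \equiv S \pmod 4$.

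The step that needs the most attention is the inductive verification of the floor identity $\lfloor S_k/2\rfloor = \lfloor S_{k-1}/2\rfloor + Y_k$; this is essentially the schoolbook-addition rule for one binary digit, but one has to be sure that carries propagating from bit~$1$ into higher positions do not corrupt the count---which they do not, since each carry out of bit~$0$ contributes exactly one unit to $\lfloor S_k/2\rfloor$ regardless of where it subsequently propagates. This is the feature that makes the single transmitted parity bit suffice to recover both $M_k$ and the incremental update to $\sum_j Y_j$, and it is the reason why one bit of communication per stage is enough.
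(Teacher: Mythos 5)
Your protocol is identical to the paper's (output $Y_k=M_{k-1}\wedge X_k$, forward $M_k=M_{k-1}\oplus X_k$), and your invariant $M_k=S_k\bmod 2$ together with $\sum_{j\le k}Y_j=\lfloor S_k/2\rfloor$ is just a restatement of the paper's single identity $\sum_{i\le k}X_i=M_k+2\sum_{i\le k}Y_i$, verified by the same induction. The proof is correct and takes essentially the same route as the paper.
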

\begin{proof}
Recall that in the sequential modulo-$(2,2)$ problem, the parties are given bit inputs $X_k\in\{0,1\}$ with the promise $\sum_i X_i\equiv 0 \operatorname{mod} 2$ and output $Y_k\in\{0,1\}$ so that the outputs fulfill $\sum_i 2Y_i-\sum_i X_i\equiv 0 \operatorname{mod} 4$. Let us consider the following deterministic protocol:
\begin{itemize}
\item The message is initialised to $M_0=0$,
\item if $X_i= M_{i-1} =1$, then the $i$-th party returns $Y_i=1$, and otherwise it returns $Y_i=0$,
\item the $i$-th party sends $M_i=(X_i+M_{i-1}) \mod 2$ to the $(i+1)$-st party.
\end{itemize}
The protocol works due to the property: $\sum_{i\leq j} X_i=M_j+\sum_{i\leq j} 2Y_i$, which can be shown by induction on $j$. Since $M_j \equiv \sum_{i\leq j} X_i \operatorname{mod} 2$, by the promise on the input for $j=N$ we have $M_N = 0$. Hence, $\sum_{i\leq N} X_i=\sum_{i\leq N} 2Y_i$, and the claim follows.
\end{proof}
The protocol is valid because the expectation value of $\sum_{k=1}^n Y_k$ for settings that satisfy the promise (assuming $X_k$ corresponds to $\hat x$ and $\hat y$ directions on the Bloch sphere) is either equal to $0$ (for even number of pairs of settings equal to 1) or 1 (for odd number of the pairs). It is therefore sufficient to keep track of the parity of the number of settings equal to $1$. This is exactly acomplished by the protocol. Note that the "local" (single-point) expectation values are not modelled by this protocol.

\section{Proof of Lemma 1}

W.l.o.g.\ let $K = \{1,\ldots,n\}$ within the proof of this claim. Define $S_i$ be the set of all modulo-$md$ remainders which can be obtained using subset sums of the first $i$ elements, 
$$ S_i = \{\left(\sum_{j\in I} \Delta_j\right) \operatorname{mod} (md) : I \subseteq \{1,\ldots,i\}\} \, .$$ 
Consider the sequence of sets, $S_1, S_2, \ldots S_n$. Since $S_i \subseteq S_{i+1} \subseteq {1, \ldots, md-1}$ for $i=1 \ldots n$ and $n \geq md$, we must have that $S_a=S_{a+1}$ for some $a$. The element $\Delta_{a+1} \in S_a$ and the set $S_a$ is invariant with respect to a modulo-$md$ shift by $\Delta_{a+1}$, 
\begin{equation}
S_a = \Delta_{a+1} + S_a \operatorname{mod} md \, . \label{invDelta}
\end{equation}
Let $p$ be the unique odd integer such that $\Delta_{a+1} = p \cdot 2^r$, for some integer $r\geq 0$.
Equation (\ref{invDelta}) implies that all multiples of $\Delta_{a+1}$ are also in $S_a (\operatorname{mod} md)$ and in particular, $2^{s-r} \Delta_{a+1} = p d \operatorname{mod} md \in S_a$. Since $p$ is odd and $m$ even, $2^{s-r} \Delta_{a+1} \neq 0 \operatorname{mod} md$ and $2^{s-r} \Delta_{a+1} = 0 \operatorname{mod} d$, as required.

\section{Communication complexity properties of sequential protocols simulating GHZ correlations}
\begin{corollary}
Any classical protocol simulating temporal GHZ correlations of a single qu$m$it on $n$ parties must:
\begin{enumerate}
\item Send $\Omega(\epsilon \log n)$ bits of communication to the next party for each of at least $n - O(n^\epsilon)$ parties, for any $\epsilon >0$.
\item Contain a sequence of $\Omega(n^{1-\epsilon})$ consecutive parties, each of which needs to send $\Omega(\epsilon \log n)$ bits of communication to the next party, for any $\epsilon >0$.
\end{enumerate}
\end{corollary}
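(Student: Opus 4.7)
The plan is to derive both items of the corollary as applications of Theorem~\ref{modmd}, with a careful choice of the modulo game parameter $d$ as a function of $n$ and $\epsilon$, followed by an elementary pigeonhole argument for the second item. The key observation is that for every admissible $(m,d)$ the sequential measurement procedure constructed earlier in the paper solves the sequential modulo-$(m,d)$ problem with certainty, so any classical simulation of the corresponding temporal GHZ correlations is automatically a valid classical solver for that problem and must therefore obey Theorem~\ref{modmd}.

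Fix $\epsilon>0$ and treat the qu$m$it dimension $m$ as an even constant. I would choose $d$ to be the largest power of two with $md\le n^{\epsilon}$, so that $d\ge n^{\epsilon}/(2m)$ while $md-1\le n^{\epsilon}-1$. Theorem~\ref{modmd} then forces any classical simulation to use
\[
\log(d/m)\;\ge\;\epsilon\log n-\log(2m^{2})\;=\;\Omega(\epsilon\log n)
\]
bits of communication in all but at most $md-1=O(n^{\epsilon})$ stages. For $n$ sufficiently large (in terms of $m$ and $\epsilon$) the inequality $md\le n$ needed for the adversarial construction of Theorem~\ref{modmd} to have enough room is automatic, and item~1 follows immediately, with the exceptional set of size $O(n^{\epsilon})$ being precisely the set of stages that do not meet the communication bound.

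For item~2 I would apply a pigeonhole estimate to the set of at most $E=md-1=O(n^{\epsilon})$ exceptional positions identified above. These positions partition the sequence of $n$ stages into at most $E+1$ maximal runs of non-exceptional (``good'') stages; the longest such run therefore contains at least $(n-E)/(E+1)=\Omega(n^{1-\epsilon})$ consecutive stages, each of which must send $\Omega(\epsilon\log n)$ bits to its successor by the bound already established.

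The main obstacle I anticipate is purely bookkeeping rather than conceptual: one has to verify that $d$ can indeed be chosen to be a power of two within a constant factor of $n^{\epsilon}/m$ (so that both the exception count and the per-stage lower bound come out at the claimed rates), and confirm that the constants hidden in the $O(\cdot)$ and $\Omega(\cdot)$ depend only on $m$ and $\epsilon$, not on $n$. No new combinatorial ingredient beyond Theorem~\ref{modmd} and the pigeonhole principle appears to be required.
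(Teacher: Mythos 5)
Your proposal is correct and follows exactly the route the paper intends: its entire proof is the one line ``take $d=\Theta(n^{\epsilon})$ in Theorem~\ref{modmd},'' and your choice of $d$ as the largest power of two with $md\le n^{\epsilon}$, together with the pigeonhole count of maximal runs between the at most $md-1$ exceptional stages, is just that argument written out in full. No discrepancy to report.
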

\begin{proof}
It suffices to take $d=\Theta(n^{\epsilon})$ in Theorem 1.
\end{proof}
Note that in order to obtain a violation of the Holevo bound for almost all parties, we need to choose an appropriate value of $n = m^{\Omega(1/\epsilon)}$, so that the required amount of communication of $\Omega (\epsilon \log n)$, which follows from the above proposition, exceeds $\log m$.

\bibliographystyle{apsrev4-1}
\bibliography{ccbib}

\end{document}